\newcommand{\NN}{{\mathcal{N}}}
\def\SS{\mathcal S}
\def\OO{\mathcal O}
\DeclareMathOperator*{\col}{col}
\DeclareMathOperator{\rank}{rank}
\newcommand{\commentout}[1]{}
\newtheorem{proposition}{Proposition}
\newtheorem{result}{Result}
\newtheorem{definition}{Definition}
\newtheorem{assumption}{Assumption}
\newtheorem{remark}{Remark} 
\newcommand\copyrighttext{%
  \footnotesize \textcopyright 2020 IEEE. Personal use of this material is permitted.
  Permission from IEEE must be obtained for all other uses, in any current or future
  media, including reprinting/republishing this material for advertising or promotional
  purposes, creating new collective works, for resale or redistribution to servers or
  lists, or reuse of any copyrighted component of this work in other works.
}
\newcommand\copyrightheader{%
    \footnotesize \centering Published in 2020 59th IEEE Conference on Decision and Control (CDC) \\
    (DOI: \href{https://doi.org/10.1109/CDC42340.2020.9304112}{10.1109/CDC42340.2020.9304112})%
}
\newcommand\copyrightnotice{%
\begin{tikzpicture}[remember picture,overlay]
    \node[anchor=north,yshift=-10pt] at (current page.north) {\fbox{\parbox{\dimexpr\textwidth-\fboxsep-\fboxrule\relax}{\copyrightheader}}};
    \node[anchor=south,yshift=10pt] at (current page.south) {\fbox{\parbox{\dimexpr\textwidth-\fboxsep-\fboxrule\relax}{\copyrighttext}}};
\end{tikzpicture}%
}
\title{Towards Distributed Accommodation of Covert Attacks in Interconnected Systems
}
\author{Angelo Barboni\\
Department of Electrical and Electronic Engineering, \\
Imperial College London \\
\texttt{a.barboni16@imperial.ac.uk}
\And Thomas Parisini \\
Department of Electrical and Electronic Engineering, \\
Imperial College London, UK, and\\
KIOS Research and Innovation Centre of Excellence, \\
University of Cyprus, and \\
Department of Engineering and Architecture,
University of Trieste, Italy \\
\texttt{t.parisini@gmail.com}
}
\begin{document}

\maketitle
\copyrightnotice

\begin{abstract}
    The problem of mitigating maliciously injected signals in interconnected systems is dealt with in this paper.
    We consider the class of covert attacks, as they are stealthy and cannot be detected by conventional means in centralized settings. 
    Distributed architectures can be leveraged for revealing such stealthy attacks by exploiting communication and local model knowledge.
    We show how such detection schemes can be improved to estimate the action of an attacker and we propose an accommodation scheme  in order to mitigate or neutralize abnormal behavior of a system under attack.
\end{abstract}

\section{Introduction}

Many systems of critical importance consist nowadays of tightly integrated physical and computational components, which may perform control and safety-critical tasks with high reliability requirements.
Additionally, such systems are often composed of several physically interconnected subsystems that exchange information over a network for a number of reasons, ranging from data analysis to design convenience, or simply because the physical system is itself geographically spread over a large area.
As a downside, however, these systems are potentially vulnerable to cyber-attacks which may entail tangible consequences on the physical layer, if not disruption of the system itself.
As observed recently~\cite{lee2016,sobczak2019dos}, attacks constitute a realistic threat, and being able to detect and counteract them to preserve some level of functionality is then of great importance.
In fact, this problem has attracted the interest of the control community over the last decade; see for instance \cite{cheng2017guest} and \cite{Dibaji2019} for recent surveys. 
However, in the majority of cases, the centralized scenario is considered, with only a few works tackling the issue from a distributed perspective~\cite{anguluri2019centralized,gallo2020distributed}. 

Compared to other types of attacks, covert attacks are a class of particularly dangerous attacks, which are undetectable by design in the centralized case~\cite{smith2015covert}. 
In~\cite{barboni19distributed}, it was shown that a specific residual generation scheme allows to detect such attacks, while they remain stealthy within the attacked subsystem.  
The distributed detection strategy is inspired by model-based fault detection (see~\cite{shames2011distributed} and~\cite{boem2017distributedTAC} for instance), with a novel design that accounts for the stealthiness of the attacks (which is not an issue for faults in general). 

In this paper, we extend the detection architecture in \cite{barboni19distributed} with the objective of neutralizing (or at least mitigating) the attacker's malicious effect on the system, that is to say we aim to design a control law that steers the system as close as possible to the desired equilibrium regardless of how the attacker manipulates the control actions (input injection).
To the best of the authors' knowledge, this is the first time that an active countermeasure methodology is proposed in the area of control security, 
and even more in relation with distributed systems.
However, accommodation of faults has received considerable attention from the control community (see~\cite{zhang2008review} for a comprehensive review on the topic).
One way to accomplish this relies on fault estimation in order to cancel their unwanted effect on the system's dynamics via a suitable change of the control action~\cite{polycarpou1995learning,blanke2006diagnosis}. 
This is effective in case of actuator or matching faults, and since input-injection attacks satisfy these same conditions, we gather from this idea in order to compensate the attacks in an additive way.
In dealing with this task from a distributed perspective, a number of issues may arise in case of sparse interconnections between subsystems.
This fundamentally ties with the problem of (partial) input reconstruction~\cite{bejarano2011partial}, as discussed later in the paper.
Additionally, results are hereby presented in discrete time, as opposed to the continuous-time case of~\cite{barboni19distributed}.

The paper is structured as follows: in Section~\ref{sec:problem}, the problem is formulated and the attack model is presented; Section~\ref{sec:detection} provides a short recap of useful results and equations for the detection strategy; in Section~\ref{sec:acco}, the accommodation strategy is presented. 
Finally, in Section~\ref{sec:sim} an academic example is given to show the effectiveness of the proposed distributed accommodation strategy.

\subsection{Notation}
For an ordered index set $\mathcal I$, and a family of matrices $\{M_i \in \mathbb{R}^{n \times m}, i \in \mathcal I\}$, $\col_{i \in \mathcal I}(M_i)$ denotes the vertical concatenation of said matrices.
For brevity, if $x(k)$ is the value of a vector signal $x$ at time $k$, $x^+ \doteq x(k+1)$ denotes the value at the next time step.
Similarly, $x^- \doteq x(k-1)$ denotes the value at the previous time step. 
For a vector $v \in \mathbb{R}^n,\, v_{[m]}$ denotes its $m$-th component.
For a matrix $M \in \mathbb{R}^{n\times m}$, $M^\dagger$ is its pseudo-inverse.
Let $\mathscr R \subset \mathscr X$ be vector spaces, then $\mathscr X / \mathscr R$ denotes the quotient space of $\mathscr X$ by $\mathscr R$.

\section{Problem Statement}\label{sec:problem}

\begin{figure*}
    \centering
    \includegraphics[width=0.7\linewidth]{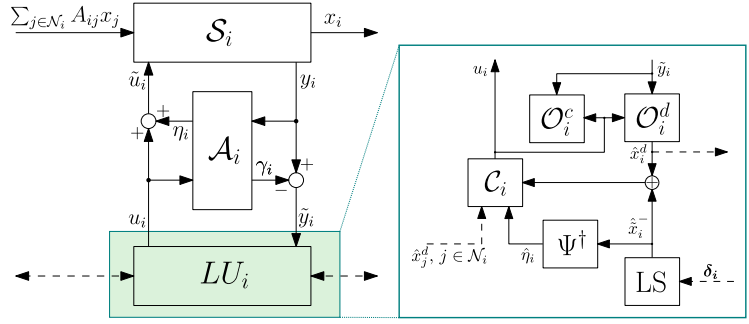}
    \caption{Diagram of a single subsystem's architecture. 
    Details about the accommodation architecture, with attack detection and accommodation measures.}
    \label{fig:arch}
\end{figure*}

We consider a linear time-invariant (LTI) system that can be partitioned into $N$ subsystems, each denoted as $\SS_i$, ${i \in \{1,\dots,N\}}$.
For each subsystem, let $\NN_i$ denote the index set of neighbors of $\SS_i$.
We model each subsystem as a discrete-time LTI system:
\begin{equation}
	\label{eq:planti}
    \mathcal S_i:\left\lbrace
    \begin{aligned}
    	&x_i^+ = A_ix_i + B_i \tilde u_i + \sum_{j \in \mathcal N_i}A_{ij}x_j \\
        &y_i = C_ix_i ,
    \end{aligned}\right.
\end{equation}
where $x_i \in \mathbb{R}^{n_i}$, $\tilde u_i \in \mathbb{R}^{m_i}$, and $y_i \in \mathbb{R}^{p_i}$ are the  local states, inputs, and outputs, respectively. 
$x_j,\, j \in \NN_i$ are the neighbors' states which enter the dynamics of $\SS_i$ through the interconnection matrix $A_{ij} \in \mathbb{R}^{n_i \times n_j}$.

\smallskip
Each subsystem is managed by a local unit $LU_i$ which contains a diagnoser implementing the proposed attack-detection strategy and a given controller $\mathcal C_i$.
Additionally, these units are interconnected along the same topology of the physical interconnections.
As shown on the left-hand side of Fig.~\ref{fig:arch}, the attacker is represented by an interconnection block $\mathcal A_i$, which injects signals $\eta_i$ and $\gamma_i$ in the control and measurement channels, respectively, according to:

\begin{equation*}
	\label{eq:injection}
    \begin{aligned}
        \tilde u_i &= u_i + \eta_i, \\
    	\tilde y_i &= y_i - \gamma_i .
    \end{aligned}
\end{equation*}
Hence, $LU_i$ receives possibly \emph{attacked} measurements $\tilde y_i$, and yields a control action $u_i$ computed accordingly; on the other hand, the plant $\SS_i$ receives the counterfeit control action $\tilde u_i$ to which corresponds an \emph{actual} output response $y_i$. 

The attacker's objective is to remain undetected while steering the system's state to a trajectory different from the nominal one, to its own advantage.
A particular instance of attacks that are stealthy by design are covert attacks, firstly introduced in~\cite{smith2015covert} and investigated in the distributed case in the time domain in~\cite{barboni19distributed}.

\smallskip
\begin{definition}
The attacker $\mathcal A_i$ is covert if the outputs $\tilde y_i$ are indistinguishable from the nominal response $y_i$. $\hfill \triangleleft$
\end{definition}

\smallskip
To perform a covert attack, the attacker implements a model $\tilde\SS_i$ given by

\begin{equation}
    \label{eq:tildeSi}
    \tilde{\mathcal{S}}_i : \left\lbrace
    \begin{aligned}
    	&\tilde x_i^+ = \tilde A_i\tilde x_i + \tilde B_i \eta_i \\
        &\gamma_i = \tilde C_i \tilde x_i ,
    \end{aligned}\right. 
\end{equation}
which is used to compute a ``canceling'' signal $\gamma_i$, for a prescribed input injection $\eta_i$. 
Let $k_a$ be the attack onset instant and let the following assumption hold.
\smallskip
\begin{assumption}\label{ass:knowledge}
The attacker has \emph{perfect knowledge} of the subsystem model, i.e. $(\tilde A_i, \tilde B_i, \tilde C_i) = (A_i,B_i,C_i)$.
$\hfill \triangleleft$
\end{assumption}

\smallskip
It is shown in \cite[Proposition 1]{barboni19distributed} that under Assumption~\ref{ass:knowledge}, if $\tilde x_i(k_a) = 0$, the attacker $\mathcal A_i$ is covert at all times.
Therefore, any residual generator exploiting only $u_i$ and $\tilde y_i$ cannot be used for detection.

We point out that Assumption~\ref{ass:knowledge}, while being difficult to achieve, represents the worst-case scenario of an omniscient attacker who is perfectly stealthy. 
Such a condition frames the detection problem as the most difficult, hence the derived results will also hold for easier cases.

For the scope of the present work, the problem is not limited to attack detection -- which has been already covered in the referenced works -- but rather it focuses on the design of a control input $u_i$ which attenuates the attack's effects.
To the best of the authors' knowledge, this is the first time that a solution to this problem in this distributed flavor is proposed.
Due to the early stage nature of this branch of research, we consider the ideal case in order to obtain basic conditions under which the proposed strategy is effective, and ignore hereby other issues such as robustness to noise.

\section{Detection Architecture}\label{sec:detection}
For sake of completeness, we recap the detection architecture presented\footnote{Extended results including disturbances and detection bounds have been presented in a journal article currently under review.} in~\cite{barboni19distributed}, as well as some important results that are needed for presentation.

At a glance, the detection architecture comprises two observers -- $\OO_i^d$ and $\OO_i^c$ -- and an alarm mechanism that compares a specially constructed residual to a threshold. 
Observer $\OO^d_i$ is decentralized and computes an estimate $\hat x^d_i$ of $x_i$ insensitive to the neighboring states $x_j,\, j \in \NN_i$; $\OO^c_i$ instead is distributed and accounts for the neighboring coupling by including communicated estimates $\hat x^d_j,\, j \in \NN_i$.

Let us define the estimation errors, where we distinguish between the \emph{actual} errors and the \emph{received} (or \emph{attacked}) ones as follows:
\begin{equation}\label{eq:errors}
\begin{aligned}
    &\epsilon_i^d\doteq x_i - \hat x_i^d, &\quad \tilde \epsilon_i^d\doteq x_i - \tilde{x}_i - \hat x_i^d,\\
    &\epsilon_i^c\doteq x_i - \hat x_i^c, &\quad \tilde \epsilon_i^c \doteq x_i - \tilde{x}_i - \hat x_i^c.
\end{aligned}
\end{equation}
Note that the \emph{attacked} errors are in fact those that a diagnoser can compute, as $\tilde y_i$ are the available measurements.
In fact, the actual errors are not available in any way, but they are still useful for analysis and to quantify the attacker's impact.

By construction, the signal $\tilde r^c_i = C_i \tilde \epsilon_i^c$ is sensitive to attacks in neighboring systems.
Suppose that Subsystem $i$ is under attack, the detection logic is as follows.
\begin{itemize}
    \item Each $LU_j, \, j \in \NN_i$ computes $\tilde r_i^c$ and compares it to a threshold $\bar r_i$.
    An alarm $\delta_j \neq 0$ is raised if $\|\tilde r_i^c\| > \bar r_i$.
    
    \item $\delta_i$ is broadcast to each neighbor; conversely, $\SS_i$ receives $\delta_j$ from all its neighbors.
    
    \item If $\delta_j \neq 0, \forall j \in \NN_i$, then $LU_i$ decides that $\SS_i$ is under attack.
\end{itemize}

\smallskip
\begin{remark}
    In~\cite{barboni19distributed}, the alarm variable $\delta_i$ was binary. 
    In the present work, instead, $\delta_i \in \mathbb{R}^{n_i}$, as it will take part into the attack accommodation algorithm, as shown in Section~\ref{sec:acco}.$\hfill\triangleleft$
\end{remark}

\subsection{Decentralized Observer $\OO^d$}
The decentralized observer consists of a discrete-time Unknown Input Observer (UIO):
\begin{equation}
	\label{eq:UIO}
    \mathcal O^d : \left\lbrace
    \begin{aligned}
    	z_i^+  &= F_i z_i + T_i B_i u_i + (K_i^{(1)} + K_i^{(2)}) \tilde y_i \\
        \hat x_i^d &= z_i + H_i \tilde y_i .
    \end{aligned}\right.
\end{equation}
If the existence conditions in~\cite{Chen1996uio} hold, \eqref{eq:UIO} is an UIO for subsystem \eqref{eq:planti}. 
As such, the error dynamics is described by:
\begin{equation}
    \label{eq:UIO_error}
    \epsilon_i^{d+} = F_i \epsilon_i^d.
\end{equation}

\begin{result}[\cite{barboni19distributed}]\label{res:UIO_error}
Let $S_i$ be under attack, if the UIO conditions and Assumption \ref{ass:knowledge} are satisfied, the following equations hold.
The error dynamic of the observer \eqref{eq:UIO} is
\begin{equation}
	\label{eq:UIO_error_att}
    \epsilon_i^{d+} = F_i \epsilon_i^d + (A_i - F_i)\tilde x_i + B_i\eta_i  ,
\end{equation}
while the attacked estimation error defined in \eqref{eq:errors} is given by 
\begin{equation}
	\label{eq:UIO_tildeerr}
    \tilde \epsilon_i^{d+} = F_i\tilde \epsilon_i^d\,.
\end{equation}
$\hfill\square$ 
\end{result}

\smallskip
A consequence of \eqref{eq:UIO_tildeerr} is that the estimate does not converge to the actual state of the system, but rather to the difference $x_i - \tilde x_i$, as can be seen from~\eqref{eq:errors}.

\subsection{Distributed Observer $\OO^c$}

The distributed observer $\OO^c_i$ relies on \emph{decentralized} estimates received over a communication network, and its dynamics is defined as:

\begin{equation}
    \label{eq:Oc}
    \begin{split}
        \mathcal O^c : \hat{x}^{c+}_i &= A_i\hat{x}^c_i + B_i u_i  + L_i (\tilde y_i - C_i\hat{x}^c_i) \\
        &\qquad + \sum_{j\in \mathcal{N}_i} A_{ij} \hat{x}^d_j.
    \end{split}
\end{equation}

Let $F_i^c = A_i - L_iC_i$, if both $F_i$ and $F_i^c$ are stable, then the estimate $\hat x_i^c$ converges to the subsystem's state in attack-free conditions.
In particular, the following result holds.

\smallskip
\begin{result}[\cite{barboni19distributed}]
\label{res:coupled_error}
   Let $S_i$ be under attack, if Assumption~\ref{ass:knowledge} is satisfied.
    The error dynamics of the observer \eqref{eq:Oc} is
        \begin{equation*}
        \label{eq:couplederr}
        \epsilon_i^{c+} = (A_i - L_i C_i)\epsilon_i^c + B_i\eta_i + L_i\gamma_i + \sum_{j \in \mathcal N_i}A_{ij}\epsilon_j^d,
    \end{equation*}
    where $\epsilon_j^d$ is given by \eqref{eq:UIO_error}, \eqref{eq:UIO_error_att}.
    Conversely, the attacked estimation error is given by 
    \begin{equation}
        \label{eq:couplederr_att}
        \tilde \epsilon_i^{c+} = (A_i - L_i C_i)\tilde \epsilon_i^c + \sum_{j \in \mathcal N_i}A_{ij}\epsilon_j^d.
    \end{equation}
    $\hfill\square$
\end{result}

\smallskip
Eq. \eqref{eq:couplederr_att} holds also when the system is not under attack.
It can be seen that the received error (and hence the residual) depends on the \emph{actual} neighbors' decentralized errors $\epsilon^d_j$.
Since under attack $\epsilon_j^d$ evolves according to~\eqref{eq:UIO_error_att}, it follows that under reachability conditions of the pairs $(F_i^c, A_{ij})$, for all $i \in \NN_j$, the error $\tilde \epsilon_i^c$ does not converge to $0$. 

\section{Attack Accommodation}\label{sec:acco}
This section is devoted to the design of a control action $u_i$, which compensates for the effect of the attacker in an attacked subsystem $\SS_i$.
This strategy is triggered after the attack has been successfully detected and isolated.
The accommodation strategy is based on the following observations:
\begin{itemize}
    \item Since the estimation errors converge in nominal conditions, the error $\tilde \epsilon^c_j,\, j \in \NN_i$ can be used to define
    \begin{equation}
        \label{eq:diff}
        d_j \doteq \sum_{l \in \mathcal N_j}A_{jl}\epsilon_l^d = \tilde \epsilon^{c+}_j - F_j^c\tilde \epsilon^c_j,
    \end{equation}
    which we have written in forward form for the sake of convenience.
    The variable $d_j$ represents the aggregate \emph{actual} error of neighboring systems. 
    \item From \eqref{eq:errors} we have that:
    \begin{equation*}
        \epsilon_i^d = \tilde \epsilon_i^d + \tilde x_i.
    \end{equation*}
    Given that $\tilde \epsilon^d_i$ converges to $0$, the actual error $\epsilon_i^d$ depends directly on the state of the attacker's state $\tilde x_i$, which can be seen as superimposed to the nominal system state $x_i^n$.
    Indeed, with initial condition $\tilde x_i(k_a) = 0$, the dynamics of $\SS_i$ can be decomposed as 
    \begin{gather*}\label{eq:state_decomposition}
    \left\lbrace
    \begin{aligned}
        &(x_i^n + \tilde x_i)^+ = A_i(x_i^n + \tilde x_i) + B_i\tilde u_i + \sum_{j \in \mathcal N_i}A_{ij}x_j, \\
        &y_i = C_i x^n_i, \\
        &\gamma_i = C_i \tilde x_i.
    \end{aligned}\right.
    \end{gather*}
\end{itemize}

In view of these observations, the developed strategy aims at constructing an estimate of the attacker's state using neighboring errors.
For this purpose, we redefine the alarm signal introduced in Section~\ref{sec:detection} using \eqref{eq:diff} as:
\begin{equation*}
    \delta_j = \left\lbrace 
    \begin{aligned}
    & d^-_j \quad&&\text{if }\|\tilde \epsilon_j^c\| > \theta_j, \\
    & 0 &&\text{otherwise},
    \end{aligned}\right.
\end{equation*}
for some suitable threshold $\theta_j$. Suppose that $\SS_i$ is under attack; if $d_j \neq 0,\, \forall j \in \NN_i$, then the attack is detected in the same way as previously presented.

A one-step lag estimate of $\hat{\tilde{x}}_i^-$ of $\tilde x_i$ can be obtained by solving the following Least Squares (LS) problem:
\begin{equation}
    \label{eq:LS}
    \hat{\tilde x}_i^- = \arg\min_{\xi} \left\lbrace \frac{1}{N_i}\sum_{j \in \NN_i}\left\| \delta_j - A_{ji}\xi \right\|_2^2 \right\rbrace .
\end{equation}
The solution to \eqref{eq:LS} is given by:
\begin{equation*}
    \label{eq:LSsol}
    \hat{\tilde x}_i^- = \left( \sum_{j \in \NN_i} A_{ji}^\top A_{ji} \right)^{-1} \left( \sum_{j \in \NN_i} A_{ji}^\top \delta_j \right).
\end{equation*}

\smallskip
\begin{remark}
With respect to just performing attack detection, the designed accommodation strategy requires that the local diagnoser also knows the \textit{outbound} interconnection matrices $A_{ji}$.$\hfill\triangleleft$
\end{remark}

\smallskip
\begin{remark}
Let $\bm{\delta}_i \doteq \col_{j \in \NN_i}(\delta_j)$ and $\bm{A}_i \doteq \col_{j \in \NN_i} (A_{ji})$.
Then problem \eqref{eq:LS} can be equivalently reformulated in matrix form as:
\begin{equation*}
    \label{eq:LS2}
    \hat{\tilde x}_i^- = \arg\min_{\xi} \left\| \bm{\delta}_i  - \bm{A}_i  \xi \right\|_2^2,
\end{equation*}
for which standard solution techniques can be applied.
$\hfill\triangleleft$
\end{remark}

\smallskip
Uniqueness of the estimate depends on well-known rank conditions on $\bm{A}_i$, which we refer to as the \emph{aggregate interconnection matrix}.
We defer discussion to the respective subsections.

To further develop our analysis, consider \eqref{eq:planti} and a control action of the form 

\begin{equation}\label{eq:control}
    u_i = K_i(\hat x^d_i + \hat{\tilde x}_i^-) + \sum_{j \in \NN_i}K_{ij}\hat x_j^d - \hat\eta_i .
\end{equation}
Matrices $K_{ij}$ can be optimally chosen~\cite{vsiljak1978large} to minimize the effects of neighbors on the local dynamics.
Without loss of generality, since our analysis focuses on attack compensation on $\SS_i$, we can assume exact ``cancellation'' of neighboring states.
Let $\epsilon^a_i \doteq \tilde x_i - \hat{\tilde x}_i^-$ and $\epsilon_i^\eta \doteq \eta_i - \hat\eta_i$, we have that:

\begin{align}\label{eq:exp_dynamics}
    \begin{aligned}
        x_i^+ &= A_ix_i + B_iK_i(\hat x^d_i + \hat{\tilde x}_i^-) + B_i(\eta_i - \hat\eta_i)\\
              &\hspace{4ex} +\sum_{j \in \NN_i}(A_{ij}+B_iK_{ij})x_j - \sum_{j \in \NN_i}B_iK_{ij}\epsilon_j^d     \\
              &= A_ix_i + B_iK_i(x_i - \tilde\epsilon_i^d - \tilde x_i + \hat{\tilde x}_i^-) + B_i\epsilon^\eta_i    \\
              &\hspace{4ex} +\sum_{j \in \NN_i}(A_{ij}+B_iK_{ij})x_j - \sum_{j \in \NN_i}B_iK_{ij}\epsilon_j^d     \\
              &= (A_i + B_iK_i)x_i + \sum_{j \in \NN_i}(A_{ij}+B_iK_{ij})x_j +  B_i\epsilon^\eta_i  \\
              &\hspace{4ex} - B_iK_i\epsilon_i^a - B_iK_i\tilde\epsilon_i^d - \sum_{j \in \NN_i}B_iK_{ij}\epsilon_j^d.
    \end{aligned}
    \raisetag{19ex}
\end{align}

In \eqref{eq:exp_dynamics}, $K_i$ can be designed to achieve asymptotic closed-loop stability, and, with proper design of the UIOs, the respective error terms are asymptotically vanishing. 
In this case, the attacked subsystem is still driven not only by the attacker's internal state, but also by the injected input $\eta_i$.
This entails the necessity of reconstructing such an input from the estimate $\hat{\tilde x}_i^-$ that has been computed.

\smallskip
\begin{remark}
Although no assumptions are made on the controller, \eqref{eq:control} is presented to simplify the analysis in the case of linear control.
In fact, the proposed accommodation strategy consists of an additive input $\hat\eta_i$ and an additive estimate compensation $\hat{\tilde x}_i$, which can in principle be used in several control designs.
$\hfill\triangleleft$
\end{remark}

\subsection{Full-rank Aggregate Interconnection}
If $\rank \bm{A}_i = n_i$, then the solution to~\eqref{eq:LS} is unique, and we have that 
$$
    \hat{\tilde x}_i^- = \tilde x_i^-.
$$
As a result, the error term $\epsilon^a_i$ obeys the dynamics
\begin{equation*}
    \label{eq:fr:tildex_error}
    \epsilon_i^{a+} = \tilde A_i \epsilon_i^a + \tilde B_i \Delta\eta_i,
\end{equation*}
where $\Delta\eta_i = \eta_i - \eta_i^-$.
It is possible to obtain an estimate $\hat \eta_i$ which solves the input reconstruction problem relative to the dynamics \eqref{eq:tildeSi} for a known $\tilde x_i^-$.

\smallskip
\begin{definition}[Relative degree]\label{def:reldegree}
Consider a linear system of the form \eqref{eq:planti}. 
Let $c_l$ be the $l$-th row of matrix $C_i$; if there exist integers $r_l$ such that
$$
C_iA_i^kB_i = 0, \; C_iA_i^{r_l-1}B_i \neq 0,\; \forall k < r_l - 1
$$
and
$$
 \rank \begin{bmatrix} c_1 A_i^{r_1 - 1} B_i \\
                       \vdots \\
                       c_p A_i^{r_{p_i} - 1} B_i\end{bmatrix} = m_i
$$
then $r = [r_1, \dots, r_{p_i}]$ is called the relative degree of system~\eqref{eq:planti}.$\hfill\triangleleft$
\end{definition}

\smallskip
Let us define the stacked vector of the attacker's state and inject input estimates
\begin{equation*}\label{eq:stack_tildex}
    \hat{\tilde{\bm{x}}}_i \doteq \col_{t \in \{1,\dots,r\}} (\hat{\tilde x}_i(k-t)),\; 
    \hat{\bm{\eta}}_i \doteq \col_{t \in \{1,\dots,r\}} (\hat{\eta}_i(k-t-1)),
\end{equation*}
respectively, and the input-to-state dynamic matrix $\Psi_i$ \cite{edelmayer2004input}:
\begin{equation*}\label{eq:Psi}
    \Psi_i \doteq \begin{bmatrix}
        B_i          & 0            & \dots & 0 \\
        A_i B_i      & B_i          & \dots & 0 \\
        \vdots       & \vdots       & \ddots& \vdots\\
        A_i^{r-1}B_i & A_i^{r-2}B_i & \dots & B_i
    \end{bmatrix},   
\end{equation*}
where $r \geq n_i$.
Then, we can state the following.

\begin{proposition}\label{prop:fr:inputest}
    The injected signal $\eta_i(k)$ can be estimated in finite time if and only if system~\eqref{eq:tildeSi} is left-invertible. 
    Furthermore,
    \begin{equation}\label{eq:fr:etaestimate_vec}
        \hat{\bm{\eta}}_i = \Psi_i^\dagger\hat{\tilde{\bm{x}}}_i
    \end{equation}
    and
    \begin{equation}\label{eq:fr:etaestimate}
        \hat\eta_i(k) = \eta_i(k - r_0 - 1) = \hat{\bm{\eta}}_{i[r]}
    \end{equation}
    $\hfill\square$
\end{proposition}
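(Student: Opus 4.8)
\textbf{Step 1 (an exact linear relation between states and inputs).}
The idea is to express the stacked sequence of delayed attacker-state estimates as the image of the stacked injected inputs under the single block-Toeplitz map $\Psi_i$. Under the standing hypothesis $\rank\bm{A}_i = n_i$ of this subsection, the estimate returned by \eqref{eq:LS} is the true attacker state with a one-step lag, $\hat{\tilde x}_i^- = \tilde x_i^-$, so the blocks of $\hat{\tilde{\bm{x}}}_i$ are exact samples of $\tilde x_i$. I would then iterate the attacker model \eqref{eq:tildeSi} --- using $\tilde A_i = A_i$, $\tilde B_i = B_i$ from Assumption~\ref{ass:knowledge} --- along a window of length $r$ (with $r\ge n_i$), solving $\tilde x_i^{+}=A_i\tilde x_i+B_i\eta_i$ for the states in the window in terms of the inputs in the window and of the state at its start; the input-dependent part is exactly a block-lower-triangular Toeplitz combination with the pattern of $\Psi_i$. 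Anchoring the window at the onset $k_a$, where $\tilde x_i(k_a)=0$, the boundary term vanishes, and --- after matching the block ordering of $\hat{\tilde{\bm{x}}}_i$ and $\hat{\bm{\eta}}_i$ with the one induced by $\Psi_i$ --- one obtains $\hat{\tilde{\bm{x}}}_i = \Psi_i\,\bm{\eta}_i$, where $\bm{\eta}_i \doteq \col_{t\in\{1,\dots,r\}}(\eta_i(k-t-1))$ is the true counterpart of $\hat{\bm{\eta}}_i$; this is the finite-horizon input-to-state identity underlying \cite{edelmayer2004input}.

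\textbf{Step 2 (inversion and the equivalence).}
If $\Psi_i$ has full column rank $rm_i$, then $\Psi_i^{\dagger}\Psi_i = I$, hence $\Psi_i^{\dagger}\hat{\tilde{\bm{x}}}_i = \bm{\eta}_i$: this is \eqref{eq:fr:etaestimate_vec}, and it holds with equality as soon as $r$ consecutive delayed estimates have accumulated, i.e.\ in finite time. Conversely, if $\Psi_i$ is column-rank deficient, pick $0\neq v\in\ker\Psi_i$; then $\bm{\eta}_i$ and $\bm{\eta}_i+v$ generate the same $\hat{\tilde{\bm{x}}}_i$, so no finite-memory causal map of the data can return $\eta_i$ and the signal is not reconstructible. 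It then remains to identify this rank condition with the hypothesis of the Proposition: $\Psi_i$ is block-lower-triangular with $B_i$ on every diagonal block and $r\ge n_i$, so $\rank\Psi_i = rm_i$ iff $\rank B_i = m_i$, which --- since in the full-rank case the attacker's state is itself recovered, so that the relevant output of \eqref{eq:tildeSi} is the state --- is exactly left-invertibility of \eqref{eq:tildeSi}, equivalently the finite-relative-degree / Markov-parameter rank condition of Definition~\ref{def:reldegree}.

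\textbf{Step 3 (extracting the estimate).}
Among the blocks of $\bm{\eta}_i = \Psi_i^{\dagger}\hat{\tilde{\bm{x}}}_i$, the $r$-th block $\hat{\bm{\eta}}_{i[r]}$ --- the one reconstructible from the data available at time $k$ --- equals, by the alignment fixed in Step~1, the true value $\eta_i(k-r_0-1)$ with $r_0$ the fixed delay of the scheme; using it as the feed-forward term $\hat\eta_i$ in \eqref{eq:control} proves \eqref{eq:fr:etaestimate}.

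The step I expect to cost the most care is Step~1: getting the block ordering of the stacked states and stacked inputs and the treatment of the window-initial state (so that it cancels rather than contaminating the estimate) to match the Toeplitz structure of $\Psi_i$ exactly, so that $\hat{\tilde{\bm{x}}}_i = \Psi_i\bm{\eta}_i$ is an equality and not merely an approximate or asymptotic relation; once that bookkeeping is pinned down, Steps~2 and~3 are routine linear algebra.
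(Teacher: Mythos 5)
Your proposal is correct and follows essentially the same route as the paper: the authors likewise reduce the claim to unique solvability of the finite-horizon Toeplitz relation $\hat{\tilde{\bm{x}}}_i = \Psi_i\hat{\bm{\eta}}_i$ under left-invertibility (read, as you do, with the state as the effective output, since $\hat{\tilde x}_i$ recovers $\tilde x_i$ exactly in the full-rank case) and attribute the $r_0+1$ delay to the relative degree plus the one-step lag of the LS estimate. Your Step~1 bookkeeping on the window-initial state and your explicit converse (a kernel vector of $\Psi_i$ yielding indistinguishable inputs) are details the paper's two-line proof leaves implicit, so you are, if anything, slightly more complete.
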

\begin{proof}
If system \eqref{eq:tildeSi} is left-invertible, then ${\hat{\tilde{\bm{x}}}_i = \Psi_i\hat{\bm{\eta}}_i}$ admits a unique solution, given by~\eqref{eq:fr:etaestimate_vec}.
The delay in the input estimate~\eqref{eq:fr:etaestimate} follows by Definition~\ref{def:reldegree}, with $c_i$ taken as the canonical euclidean basis vectors.
In this, $r_0$ is the largest component of the vector relative degree, i.e. $$r_0 = \max_{l \in \{ 1,\dots,r_{p_i}\}}(r_l).$$
The additional lag step is given by the intrinsic delay in the estimate $\hat{\tilde x}_i$.
\end{proof}

\smallskip
\begin{remark}
The left-invertibility condition necessary to obtain $\hat{\bm{\eta}}_i$ is implied by existence conditions for the UIO $\OO_i^d$~\cite{Hou1998input}, hence no further assumptions are made on the problem setting.
$\hfill\triangleleft$
\end{remark}

\smallskip
In this case, from an analytical point of view, the compensation mismatch depends on differences of the attacker's input signal because of the intrinsic delay of the estimation procedure.
For constant injected inputs, e.g. steady offsets, it follows that attack compensation is exact.

\smallskip
\begin{remark}
Notice that ${\rank A_{ji} = n_i}$, ${\forall j \in \NN_i} \Rightarrow {\rank \bm{A}_i = n_i}$, but the converse is not true in general.$\hfill\triangleleft$
\end{remark}

\subsection{Low-rank Aggregate Interconnection}
In this subsection, we consider the case ${\rank \bm{A}_i < n_i}$.
This is attained when a subset of the components of $x_i$ does not influence the neighbors.
More formally, the aggregate interconnection is low-rank if $\exists\, g_i \in \mathbb{N}$ such that
$$
 \dim \left( \bigcap_{j \in \NN_i} \ker A_{ji} \right) = g_i > 0.
$$

We can introduce a decomposition of the state space $\mathbb{R}^{n_i}$ into a \emph{non-interacting} subspace $X_i^{\perp} \doteq \ker \bm{A}_i$ and an \emph{interacting} one $X_i^{||} \doteq \mathbb{R}^{n_i} / X_i^{\perp}$.
Clearly, we have that $\dim X^{||} = n_i - g_i$. 
Consequently, we can define respective canonical projections of $\mathbb{R}^{n_i}$ onto these subspaces, and in particular we consider $P_i: \mathbb{R}^{n_i} \to X_i^{||}$.

With this projection, the solution of the LS problem will be exact only on the interacting subspace.
In particular, we have that 
\begin{align*}
    \begin{aligned}
        & \hat{\tilde x}_i^{||-} \doteq P_i \hat{\tilde x}_i^- = P_i \tilde x_i^-, \\
        &  \hat{\tilde x}_i^{\perp -} \doteq (I - P_i) \hat{\tilde x}_i^- = 0.
    \end{aligned}
\end{align*}
By means of this projection, it is possible to reframe the problem as the input reconstruction for the system
\begin{equation}\label{eq:proj_sys}
    \tilde \SS^{||}_i:\left\lbrace
    \begin{aligned}
        &\tilde x_i^+ = \tilde A_i \tilde x_i + \tilde B_i \eta_i \\
        &\tilde x_i^{||} = P_i\tilde x_i,
    \end{aligned}\right.
\end{equation}
where the second equation is considered as the system output. 
The input-output matrix can be rewritten as
\begin{equation*}
    \label{eq:Psipar}
    \Psi^{||}_i \doteq \begin{bmatrix}
        P_iA_i^{r_0-1} B_i  & 0                   & \dots  & 0 \\
        P_iA_i^{r_0} B_i    & P_iA_i^{r_0-1} B_i  & \dots  & 0 \\         
        \vdots              & \vdots              & \ddots & \vdots\\
        P_iA_i^{r-1}B_i     & P_iA_i^{r-2}B_i     & \dots  & P_iA_i^{r_0-1} B_i 
    \end{bmatrix},
\end{equation*}
where $r \geq n_i$ and $r_0 = \max_{l \in \{ 1,\dots,r_{p_i}\}}(r_l)$ is the maximum relative degree of~\eqref{eq:proj_sys}.

\smallskip
\begin{definition}[\cite{trentelman2012control}]
The constant $\lambda \in \mathbb{C}$ is an $(A,C)$-unobservable eigenvalue if $\rank \begin{bmatrix} \lambda I - A \\ C \end{bmatrix} < n$, with $n = \dim A$. $\hfill\triangleleft$
\end{definition}

\smallskip
\begin{proposition}
    The input $\eta_i(k)$ in \eqref{eq:proj_sys} can be estimated in finite time if and only if the set of invariant zeros of $\tilde \SS_i^{||}$ is identical to the set of $(\tilde A_i, P_i)$-unobservable eigenvalues.
    Furthermore, 
    \begin{equation}\label{eq:etapar_vec}
        \hat{\bm{\eta}}_{i} = \left(\Psi_i^{||}\right)^\dagger \hat{\tilde{\bm{x}}}_i^{||}
    \end{equation}
    and
    \begin{equation}\label{eq:etapar}
        \hat\eta_i(k) = \eta_i(k-r_0-1) = \hat{\bm{\eta}}_{i[r]}
    \end{equation}
     $\hfill\square$
\end{proposition}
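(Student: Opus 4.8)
The plan is to follow the template of Proposition~\ref{prop:fr:inputest}, now built on the projected model~\eqref{eq:proj_sys} in place of~\eqref{eq:tildeSi}; the new feature is that the ``output'' $\tilde x_i^{||}=P_i\tilde x_i$ exposes only the interacting part of the attacker's state. First I would unroll the dynamics of~\eqref{eq:proj_sys} over a window of length $r\ge n_i$: using $\tilde x_i(k_a)=0$ together with the exactness $\hat{\tilde x}_i^{||-}=P_i\tilde x_i^-$ established just above, the stacked projected estimates and the stacked input estimates obey the Toeplitz identity $\hat{\tilde{\bm x}}_i^{||}=\Psi_i^{||}\hat{\bm\eta}_i$, with $\Psi_i^{||}$ exactly the relative-degree-shifted input-to-output matrix written before the statement (leading Markov parameter $P_iA_i^{r_0-1}B_i$, where $r_0=\max_{l}r_l$ is the largest relative-degree component of~\eqref{eq:proj_sys}). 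Up to here the manipulation is mechanical.

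The substance of the ``if and only if'' is that $\eta_i$ is reconstructible in finite time precisely when $\Psi_i^{||}$ attains full column rank for some finite $r$ --- and, by a Cayley--Hamilton argument on the Markov parameters $P_iA_i^{k}B_i$, it does so already for $r=n_i$ if it ever does --- and that this is equivalent to the stated spectral coincidence. A nonzero element of $\ker\Psi_i^{||}$ corresponds to a zero-output trajectory of $\tilde\SS_i^{||}$ carrying a nonzero input, so such trajectories must be absent. Translating this through the system pencil of $\tilde\SS_i^{||}$ --- that is, $zI-\tilde A_i$ stacked over $P_i$ and bordered by $-\tilde B_i$ --- a zero-output trajectory with nonzero input at mode $z=\lambda$ is exactly a pencil kernel vector at $\lambda$ with nonzero ``input part'', i.e. an invariant zero of $\tilde\SS_i^{||}$ that is not merely a $(\tilde A_i,P_i)$-unobservable eigenvalue (the latter always produce pencil kernel vectors with zero input part: a $v\ne0$ with $(\lambda I-\tilde A_i)v=0$ and $P_iv=0$ annihilates the pencil). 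Hence the obstruction to finite-time reconstruction is precisely the existence of an invariant zero of $\tilde\SS_i^{||}$ lying outside the $(\tilde A_i,P_i)$-unobservable spectrum --- equivalently, the two sets being distinct. Making this rigorous uses the zero-structure theory of~\cite{trentelman2012control} and the Kalman decomposition relative to $(\tilde A_i,P_i)$; it also uses, along the way, the left-invertibility of $(\tilde A_i,\tilde B_i,\tilde C_i)=(A_i,B_i,C_i)$ implied by the UIO existence conditions (hence $\tilde B_i$ of full column rank). This is the one genuinely nontrivial step.

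With $\Psi_i^{||}$ of full column rank, the identity $\hat{\tilde{\bm x}}_i^{||}=\Psi_i^{||}\hat{\bm\eta}_i$ has the unique solution $\hat{\bm\eta}_i=(\Psi_i^{||})^\dagger\hat{\tilde{\bm x}}_i^{||}$, which is~\eqref{eq:etapar_vec}. The delay accounting is then identical to Proposition~\ref{prop:fr:inputest}: the last block-component $\hat{\bm\eta}_{i[r]}$ returns $\eta_i$ shifted by the inherent inversion delay $r_0$ of~\eqref{eq:proj_sys} (Definition~\ref{def:reldegree} applied with $P_i$ in the role of $C_i$), plus one further step from the intrinsic lag in $\hat{\tilde x}_i$, giving $\hat\eta_i(k)=\eta_i(k-r_0-1)=\hat{\bm\eta}_{i[r]}$, i.e.~\eqref{eq:etapar}. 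Everything outside the second paragraph mirrors the full-rank case, so the main obstacle is exactly the equivalence there between finite-window invertibility of the projected system and the coincidence of its invariant zeros with the $(\tilde A_i,P_i)$-unobservable eigenvalues.
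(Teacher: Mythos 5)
Your proposal is correct in substance but proves more than the paper does: the paper's entire argument for the ``if and only if'' is a citation of \cite[Theorem 4.10]{bejarano2009unknown}, which directly asserts that the unknown input of \eqref{eq:proj_sys} is reconstructible in finite time exactly when the invariant zeros of $\tilde \SS_i^{||}$ coincide with the $(\tilde A_i, P_i)$-unobservable eigenvalues; once left invertibility is granted, the paper declares \eqref{eq:etapar_vec} uniquely solvable and defers the delay bookkeeping to Proposition~\ref{prop:fr:inputest}, exactly as you do in your first and third paragraphs. Your second paragraph, by contrast, reconstructs the content of that cited theorem from first principles via the Rosenbrock pencil: full column rank of $\Psi_i^{||}$ for some finite $r$ (and hence for $r = n_i$, by Cayley--Hamilton) fails precisely when there is a zero-output trajectory carrying a nonzero input, which you identify with an invariant zero whose pencil kernel vector has nonzero input part. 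This buys a self-contained argument and makes visible \emph{why} the unobservable eigenvalues are harmless (their kernel vectors have zero input component), at the cost of one step you gloss over: coincidence of the two spectra \emph{as sets} does not by itself exclude that an unobservable eigenvalue $\lambda$ also admits a pencil kernel vector with nonzero input part, so the sufficiency direction genuinely needs the geometric machinery you invoke (the weakly unobservable subspace and its intersection with $\im \tilde B_i$, i.e.\ the zero-structure theory of \cite{trentelman2012control} together with full column rank of $\tilde B_i$ from the UIO existence conditions). That is precisely the content packaged inside \cite[Theorem 4.10]{bejarano2009unknown}, so your sketch is a faithful, more explicit route to the same conclusion rather than a gap.
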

\begin{proof}
Using~\cite[Theorem 4.10]{bejarano2009unknown}, we ensure that~\eqref{eq:proj_sys} is left invertible.
In that case,~\eqref{eq:etapar_vec} admits a unique solution.
The remaining considerations follow those in Proposition~\ref{prop:fr:inputest}.
\end{proof}

\smallskip
Finally, reusing $\hat \eta_i$ in \eqref{eq:proj_sys} allows for a forward estimation of $\tilde x_i^\perp$.
In fact, by definition, such part of the state does not \emph{directly} affect the neighboring dynamics and hence the communicated errors.
As a result, the only way of reconstructing the entire $\hat{\tilde x}_i$ is via the state update equation of the local attacker's model~\eqref{eq:tildeSi}.

\smallskip
\begin{remark}
The decomposition method presented in this subsection can be related to~\cite{bejarano2011partial}, where however the problem statement is different. 
Despite the differences in the setting, in the cited work, the input matrix (in the present case, the outbound interconnection) is decomposed onto orthogonal subspaces, and partial left-invertibility conditions are developed for the decomposed system.$\hfill\triangleleft$
\end{remark}

\section{Simulation Example}\label{sec:sim}
We show the effectiveness of the proposed method on a simple numerical example on regulation.
This example is meant to illustrate the practicality of the proposed procedure.

The overall system comprises $N=5$ subsystems and the topology is described the following neighbors sets: $\NN_1 = \{ 2, 3\}$, $\NN_2 = \{ 1, 3, 4\}$, $\NN_3 = \{ 1,2\}$, $\NN_4 = \{ 1,2,5\}$, $\NN_5 = \{4\}$.
We consider the full and low rank interconnection cases, and for both we use the following subsystem's dynamics. $\forall i \in \{1, \dots, N\}$: 
\begin{align*}
    A_i = \begin{bmatrix}
        0.4 & 0.2 \\
        0   & 0.3
    \end{bmatrix} && 
    B_i = \begin{bmatrix}
        0 \\ 1
    \end{bmatrix} && 
    C_i = I && D_i = 0.
\end{align*}
Choice of $A_{ij}$ will be presented in two separate examples in the following subsections.
Individual pairs of observers are designed for the two systems, as presented in Section~\ref{sec:detection}.
Each system implements a control law of the form~\eqref{eq:control}, which optimally decouples the neighboring dynamics and achieves a prescribed rate of convergence.

For simplicity and without loss of generality, all subsystems implement the same dynamics and the same interconnection.
At time $k_a = 20$, $\SS_3$ is covertly attacked according to the model presented in Section~\ref{sec:problem}.
The attacker's objective is to introduce a steady-state error into the regulator.

\subsection{Full Rank Interconnections}
For this case, the interconnection matrix is chosen as
$$
A_{ij} = \begin{bmatrix} 0.1 & 0 \\ 0 & -0.01 \end{bmatrix}.
$$
Results for this example can be seen in Fig.~\ref{fig:fr:compensation} and~\ref{fig:fr:estimates}.
In the former, components of the true state are shown, and it is possible to see that the effect of the attack is compensated according to the controller's dynamics with a certain delay.
This delay is particularly evident in Fig.~\ref{fig:fr:estimates}: in the left-hand side figure, the one-step delay intrinsic in the computation of $\hat{\tilde x}_3$ is shown.
This translates into an $r_0 + 1$ delay for the computation of $\hat{\eta}_3$, as depicted on the right-hand side of Fig.~\ref{fig:fr:compensation}.

\begin{figure}
    \centering
    \includegraphics[width=.8\linewidth]{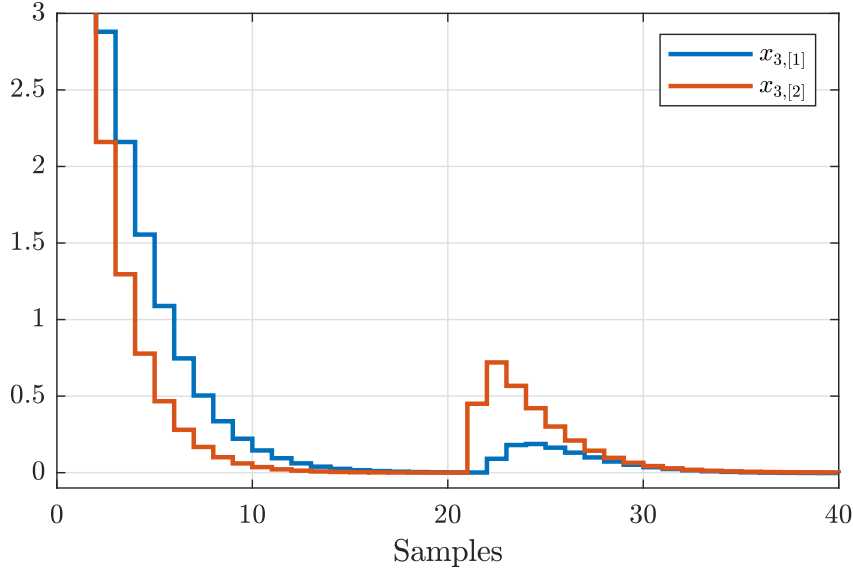}
    \caption{State trajectories of $\SS_3$ (full rank interconnection).}
    \label{fig:fr:compensation}
\end{figure}

\begin{figure}
    \centering
    \includegraphics[width=\linewidth]{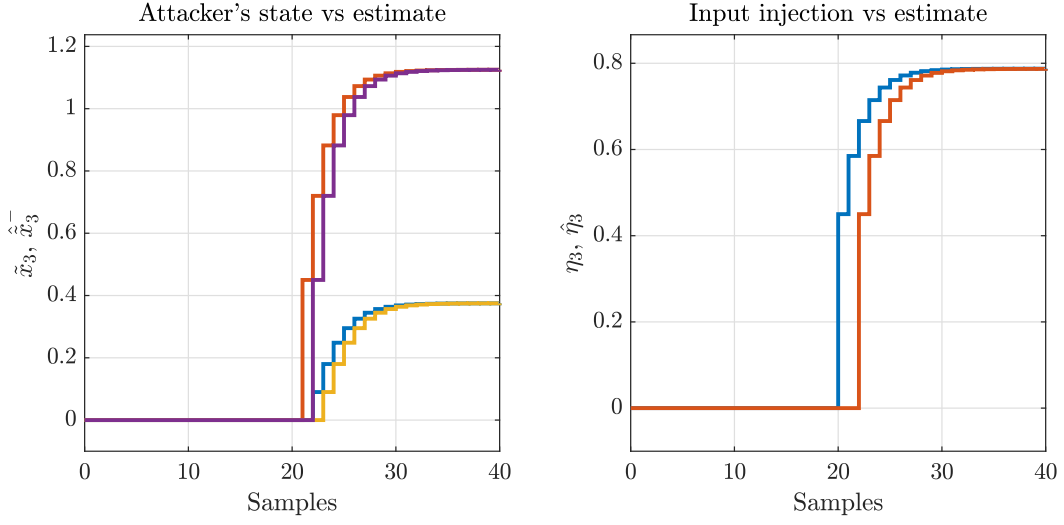}
    \caption{(left) Attacker's state trajectories and LS estimates; (right) Actual injected signal $\eta_3$ and reconstructed estimate $\hat\eta_3$ (full rank interconnection).}
    \label{fig:fr:estimates}
\end{figure}

\subsection{Low Rank Interconnections}
In this case, the interconnection matrix is chosen as
$$
A_{ij} = \begin{bmatrix} 0.1 & 0 \\ -0.1 & 0\end{bmatrix}.
$$
As in the previous subsection, the results can be seen in Fig.~\ref{fig:lr:compensation} and~\ref{fig:lr:estimates}.
It can be noticed how the performance of the accommodated system is not particularly different from that obtained in the full rank case.
On the other hand, the estimation of the attacker's state $\tilde x_3$ is affected by a longer delay, as shown in the left-hand side of Fig.~\ref{fig:lr:estimates}. 
These estimates are obtained using the reconstructed input as computed by~\eqref{eq:etapar}, and which are shown in the right-hand side of the picture.

\begin{figure}
    \centering
    \includegraphics[width=.8\linewidth]{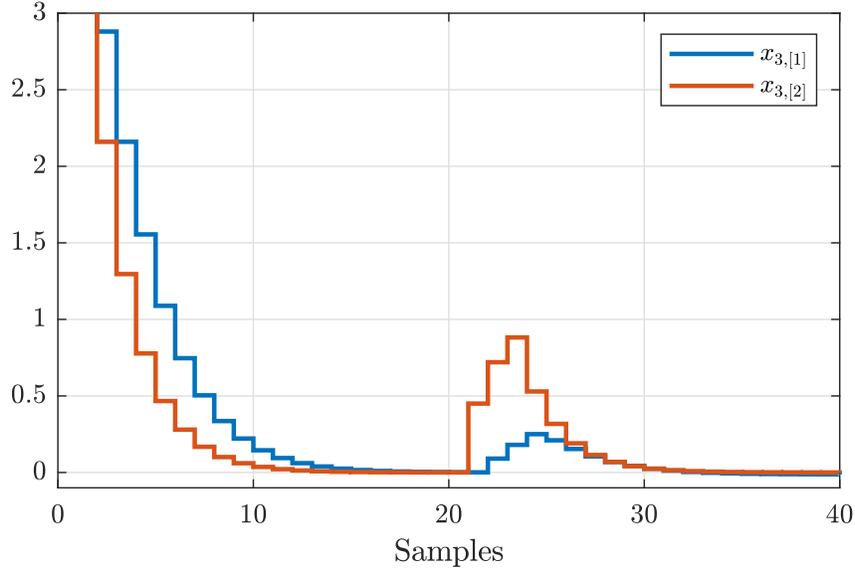}
    \caption{State trajectories of $\SS_3$ (low rank interconnection).}
    \label{fig:lr:compensation}
\end{figure}

\begin{figure}
    \centering
    \includegraphics[width=\linewidth]{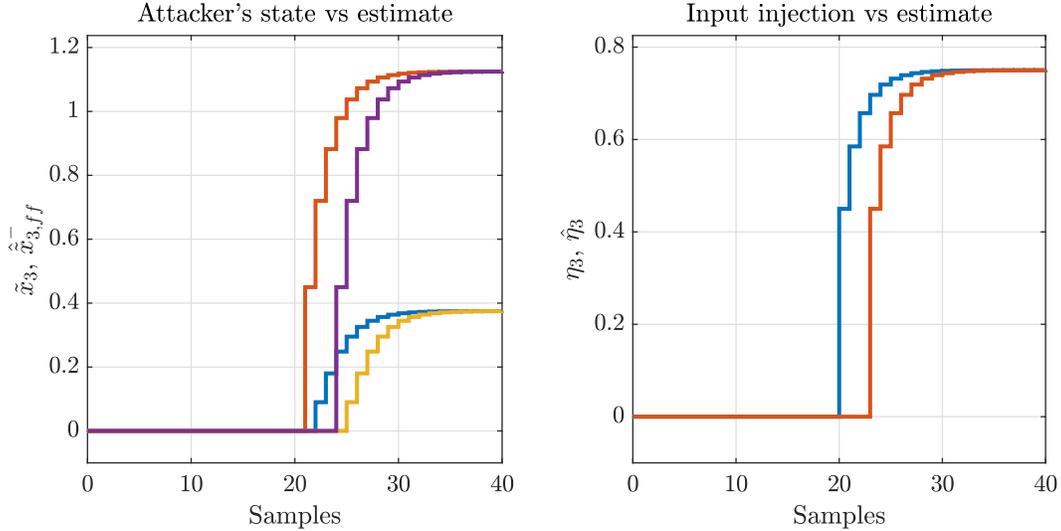}
    \caption{(left) Attacker's state trajectories and LS estimates; (right) Actual injected signal $\eta_3$ and reconstructed estimate $\hat\eta_3$. Notice the longer delay needed due to the projection procedure (low rank interconnection).
    }
    \label{fig:lr:estimates}
\end{figure}

\section{Conclusions and Future Work}
In this paper, a novel distributed methodology for accommodation of stealthy local attacks in interconnected systems is presented.
To the best of the authors' knowledge, this is the first time a step is done in this direction, and the approach can be in principle applied to other cases where locally unobservable states have no effects on residuals.
Given the early stage nature of the presented methodology, additional work is being done on characterizing robustness and the impact of noise and disturbances on the estimates.

\subsubsection*{Acknowledgments}
This work has been partially supported by the EPSRC Centre for Doctoral Training in High Performance Embedded and Distributed Systems  (HiPEDS, Grant Reference EP/L016796/1), the European Union’s Horizon 2020 Research and Innovation Programme under grant agreement No. 739551 (KIOS CoE), and by the Italian Ministry for Research in the framework of the 2017 Program for Research Projects of National Interest (PRIN), Grant no. 2017YKXYXJ.


\end{document}